\tikzstyle{every picture}=[baseline=-0.25em]
\tikzstyle{dotpic}=[scale=0.6]
\tikzstyle{diredges}=[every to/.style={diredge}]
\tikzstyle{dot graph}=[shorten <=-0.1mm,shorten >=-0.1mm,scale=0.6]
\tikzstyle{digraph}=[-latex]
\tikzstyle{plot point}=[circle,fill=black,minimum width=2mm,inner sep=0]
\tikzstyle{string graph}=[scale=0.6]
\tikzstyle{sg diredge}=[-stealth]
\tikzstyle{rewrite edge}=[-open triangle 45]
\tikzstyle{sg bold diredge}=[-stealth,thick,shorten >=-1pt]
\tikzstyle{sg vertex}=[circle,minimum width=2.2mm,fill=white,draw=black,inner sep=0mm]
\tikzstyle{labelled sg vertex}=[circle,minimum width=7mm,fill=white,draw=black,inner sep=0mm]
\tikzstyle{sg grey vertex}=[sg vertex,fill=gray!30!white]
\tikzstyle{sg black vertex}=[sg vertex,fill=black]
\tikzstyle{sg bold vertex}=[circle,minimum width=2.2mm,fill=white,draw=black,very thick,inner sep=0mm]
\tikzstyle{sg wire vertex}=[circle,minimum width=1mm,fill=black,inner sep=0mm]
\tikzstyle{cloud vertex}=[fill=white, draw=black, inner sep=2 mm, shape=cloud, aspect=1.5]
\tikzstyle{tick vertex}=[rectangle,fill=black,minimum height=1mm,minimum width=2.5mm,inner sep=0mm]
\tikzstyle{braceedge}=[decorate,decoration={brace,amplitude=2mm,raise=-1mm}]
\tikzstyle{small braceedge}=[decorate,decoration={brace,amplitude=1mm,raise=-1mm}]
\tikzstyle{left hook arrow}=[left hook-latex]
\tikzstyle{right hook arrow}=[right hook-latex]
\tikzstyle{bbox edge}=[draw=blue]
\tikzstyle{bbox include}=[->,draw=blue]
\tikzstyle{bbox corner}=[inner sep=0pt,rectangle,fill=blue,draw=blue,minimum width=1.5mm,minimum height=1.5mm]
\tikzstyle{west wire label}=[font=\footnotesize,anchor=west,inner sep=1pt,xshift=-3pt]
\tikzstyle{east wire label}=[font=\footnotesize,anchor=east,inner sep=1pt,xshift=3pt]
\tikzstyle{dot}=[inner sep=0.7mm,minimum width=0pt,minimum height=0pt,fill=black,draw=black,shape=circle]
\tikzstyle{white dot}=[dot,fill=white]
\tikzstyle{alt white dot}=[white dot,label={[xshift=2.9mm,yshift=-0.1mm]left:$\cdot$}]
\tikzstyle{gray dot}=[dot,fill=gray!50]
\tikzstyle{box vertex}=[draw=black,rectangle]
\tikzstyle{whitebg}=[fill=white,inner sep=2pt]
\tikzstyle{graph state vertex}=[sg vertex,fill=black]
\tikzstyle{square box}=[rectangle,fill=white,draw=black,minimum height=6mm,minimum width=6mm]
\tikzstyle{square gray box}=[rectangle,fill=gray!30,draw=black,minimum height=6mm,minimum width=6mm]
\tikzstyle{point}=[regular polygon,regular polygon sides=3,draw=black,scale=0.75,inner sep=-0.5pt,minimum width=7mm,fill=white]
\tikzstyle{copoint}=[point,regular polygon rotate=180,fill=white]
\tikzstyle{gray point}=[point,fill=gray!40!white]
\tikzstyle{gray copoint}=[copoint,fill=gray!40!white]
\tikzstyle{open graph}=[baseline=-0.25em]
\tikzstyle{greybg}=[background rectangle/.style={fill=black!5,draw=black!30,rounded corners=1ex}, show background rectangle]
\tikzstyle{edge point}=[circle,minimum width=1mm,fill=black,inner sep=0mm]
\tikzstyle{vertex point}=[circle,minimum width=2.2mm,fill=white,draw=black,inner sep=0mm]
\tikzstyle{gray vertex point}=[circle,minimum width=2.2mm,fill=gray!30!white,draw=black,inner sep=0mm]
\tikzstyle{edge label}=[inner sep=2pt, font=\small]
\tikzstyle{on edge label}=[fill=white, font=\footnotesize, inner sep=1 pt]
\newcommand{\edgearrow}{{\arrow[black]{>}}}
\newcommand{\edgetick}{{\arrow[black,scale=0.7,very thick]{|}}}
\tikzstyle{diredge}=[->]
\tikzstyle{medium diredge}=[->]
\tikzstyle{short diredge}=[->]
\tikzstyle{halfedge}=[-)]
\tikzstyle{other halfedge}=[(-]
\tikzstyle{freeedge}=[(-)]
\tikzstyle{white edge}=[line width=5pt,white]
\tikzstyle{tick}=[postaction=decorate,decoration={markings, mark=at position 0.5 with \edgetick}]
\tikzstyle{small map edge}=[|-latex, gray!60!blue, shorten <=0.9mm, shorten >=0.5mm]
\tikzstyle{thick dashed edge}=[very thick,dashed,gray!40]
\tikzstyle{map edge}=[|-latex,very thick, gray!40, shorten <=1mm, shorten >=0.5mm]
\tikzstyle{tickedge}=[postaction=decorate,
\tikzstyle{dirtickedge}=[postaction=decorate,
\tikzstyle{dirdoubletickedge}=[postaction=decorate,
\tikzstyle{arrs}=[-latex,font=\small,auto]
\tikzstyle{arrow plain}=[arrs]
\tikzstyle{arrow dashed}=[dashed,arrs]
\tikzstyle{arrow bold}=[very thick,arrs]
\tikzstyle{arrow hide}=[draw=white!0,-]
\tikzstyle{arrow reverse}=[latex-]
\tikzstyle{cdnode}=[]
\tikzstyle{cnot}=[fill=white,shape=circle,inner sep=-1.4pt]
\tikzstyle{bang box}=[draw=black,dashed,minimum height=12mm,minimum width=12mm,fill=gray!20]
\tikzstyle{wire label}=[font=\footnotesize, auto]
\newcommand{\cmdrewritesto}{\tikz[baseline=-0.25em] { \draw [-open triangle 45, line width=0.2pt] (0,0) -- (0.5,0); }\,}
\DeclareMathOperator{\rewritesto}{\cmdrewritesto}
\newtheorem{definition}{Definition}
\newtheorem{theorem}{Theorem}
\newcommand{\subtype}{\mbox{\texttt <:}}
\newcommand{\seml}{\mbox{$\llbracket$}}
\newcommand{\semr}{\mbox{$\rrbracket$}}
\newcommand{\sem}[1]{\seml{} #1 \semr{}}
\definecolor{light-gray}{gray}{0.75}
\newcommand{\ncbox}[1]{\fcolorbox{light-gray}{light-gray}{#1}} 	
\newcommand{\cbox}[1]{\small{\ncbox{#1}}}
\title{Towards Automated Proof Strategy Generalisation} 
\titlerunning{Towards Automated Proof Strategy Generalisation} 
\authorrunning{G. Grov \& E. Maclean}
\author{Gudmund Grov \\
School of Mathematical and Computer Sciences, 
Heriot-Watt University, Edinburgh, UK \\
\url{G.Grov@hw.ac.uk}
\and
Ewen Maclean \\
School of Informatics, University of Edinburgh, UK \\
\url{E.Maclean@ed.ac.uk}
}
\begin{document}
 
 \maketitle
 
\begin{abstract}
 The ability to automatically generalise (interactive) proofs and use
  such generalisations to discharge related conjectures is a very hard
  problem which remains unsolved; this paper shows how we hope to make a start on solving
  this problem. We develop a notion of \emph{goal types} to capture key properties of goals, which enables abstractions over the specific order and number of sub-goals arising when composing tactics. We show
  that the goal types form a lattice, and utilise this property in the techniques we develop to automatically generalise proof strategies in order to reuse it for proofs of related conjectures.
We illustrate our approach with an example.
\end{abstract}


\section{Introduction}\label{sec:intro}

When verifying large systems one often ends up applying the same proof strategy many times -- albeit with small variations. An 
expert user/developer of a theorem proving system would often implement common proof patterns as a so-called \emph{tactics}, and use this to automatically discharge ``similar" conjectures. However, other users often need to manually prove each conjecture. Our ultimate goal is to automate the process of generalising a proof
(possibly a few proofs) into a sufficiently generic proof strategy capable of proving ``similar" conjectures. In this paper we make a small step towards this goal by developing a suitable 
representation with necessary strong formal properties, and give two generic methods which utilises this representation to generalise a proof.

Whilst the manual repetition of similar proofs have been observed across different formal methods, for example Event-B, B and VDM (see \cite{AI4FM:avocs09}), we will focus on a subset of \emph{separation logic} \cite{SepLogic02}, used to reason about pointer-based programs\footnote{However, note that we believe that our approach is still generic across different formal methods.}. In the subset, there are
two binary operations $*$ and $\wedge$ and a predicate $pure$, with
the following axioms:
$$
\begin{array}{rrlr}
(A * B) * C &\Leftrightarrow&A * (B * C) & \qquad (\textsf{ax1}) \\
pure(B) \;\to\; (A \wedge B) * C&\Leftrightarrow& (A * C) \wedge B
 & \qquad (\textsf{ax2}) \\
\end{array}
$$
These axioms pertain specifically to separation logic, and allow {\em
  pure}/functional content to be expressed apart from {\em shape} 
content, used to describe resources. 
Now, consider the conjecture:
\begin{eqnarray}
&p: pure(e),h: c * ((f * (d * b) \wedge e) \wedge e) * a
\vdash & ((c*f)*(d \wedge e)) * ((b \wedge e) * a)\label{eq:g1}
\end{eqnarray}
which demonstrate the typical form of a
goal resulting from proving properties about heap structures, which
involve some resource content and some functional content. For
example, one could view the $a,b,c,d,f$ as propositions about space on a heap, with $e$ containing some functional information -- for example about order. 

\begin{figure}
\begin{tabular}{ll}
\begin{minipage}[c]{0.6\textwidth}
\begin{tabbing}
   \=\textbf{\textsf{lemma}} \textbf{\textsf{assumes}} \textsf{p}: $pure (e)$ \\
\> $\quad$ \=\textbf{and} \textsf{h}: $c * ((f * (d * b) \wedge e) \wedge e) * a$ \\
\>\>\textbf{\textsf{sh}}\=\textbf{\textsf{ows}} \cbox{$g_1:$ $((c*f)*(d \wedge e)) * ((b \wedge e) * a)$} $\qquad$
\\
\>\>\>\textbf{\textsf{apply}} (subst ax1) ~\= 
\cbox{[$g_2: ((c * (f \wedge e) * d \wedge e) * b) * a$]} \\
\>\>\>\textbf{\textsf{apply}} (subst ax1)\> 
\cbox{[$g_3: (c * ((f \wedge e) * d \wedge e) * b) * a$]}\\
\>\>\>\textbf{\textsf{apply}} (subst ax1)\> 
\cbox{[$g_4: (c * (f \wedge e) * (d \wedge e) * b) * a$]}\\
\>\>\>\textbf{\textsf{apply}} (subst ax1)\> 
\cbox{[$g_5: c * ((f \wedge e) * (d \wedge e) * b) * a$]}\\
\>\>\>\textbf{\textsf{apply}} (subst ax2)\> 
\cbox{[$g_6: pure (e),~ g_7: c * ((f * (d \wedge e) * b) \wedge e) * a$]}\\
\>\>\>\textbf{\textsf{apply}} (rule p)\> 
\cbox{[$g_7: c * ((f * (d \wedge e) * b) \wedge e) * a$]}\\
\>\>\>\textbf{\textsf{apply}} (subst ax2)\> 
\cbox{[$g_8: pure (e),~ g_{9}:c * ((f * (d * b) \wedge e) \wedge e) * a$]}\\
\>\>\>\textbf{\textsf{apply}} (rule p)\> 
\cbox{[$g_{9}:c * ((f * (d * b) \wedge e) \wedge e) * a$]}\\
\>\>\>\textbf{\textsf{apply}} (rule h)\> 
\cbox{[$\quad$]}\\
\>\>\>\textbf{\textsf{done}}
\end{tabbing}
\end{minipage}
&
\begin{minipage}[c]{0.4\textwidth}
\includegraphics[width=0.55\textwidth]{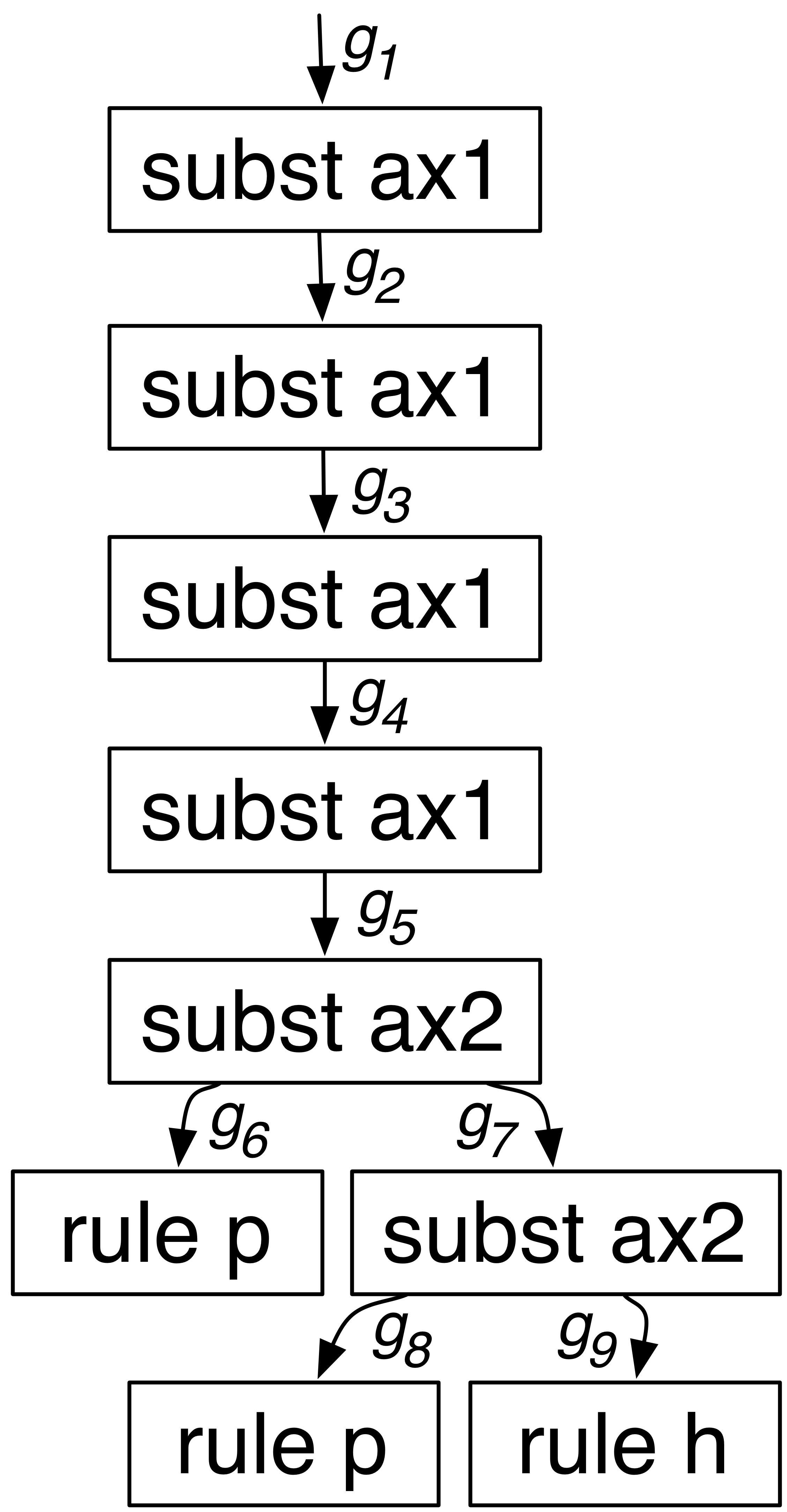}
\end{minipage}
\end{tabular}
\caption{Isabelle proof of (\ref{eq:g1}). The \ncbox{shaded area} illustrate the proof state, and the r.h.s shows the proof tree.}
\label{fig:isaproof}
\end{figure}

Figure \ref{fig:isaproof} illustrates a proof of this conjecture in the Isabelle theorem prover.  Next, consider the following ``similar" conjecture:
\begin{eqnarray}
&p':pure(d),h':a * (((b*c) \wedge d) * e) \vdash ((a * ((b \wedge d) * c)) * e)& \label{eq:g2}
\end{eqnarray}
which again demonstrate the form of a typical proof. This conjecture can be proven by the following sequence of tactic applications:
$$
\textbf{\textsf{apply}}~(\textrm{subst ax1}) ;~ 
\textbf{\textsf{apply}}~(\textrm{subst ax2}) ;~ 
\textbf{\textsf{apply}}~(\textrm{rule p'}) ;~ 
\textbf{\textsf{apply}}~(\textrm{rule h'})
$$
Our goal is to be able to apply some form of \emph{analogous} reasoning to use the proof shown in Figure \ref{fig:isaproof} to
automatically discharge (\ref{eq:g2}). However, a naive reuse of this proof will not work since:
\begin{itemize}
\item There are different number of tactic applications in the two proofs.
\item Naive generalisations such as ``apply \textrm{subst ax1} until it fails" will fail, since \textrm{subst ax1} is still applicable for 
(\ref{eq:g2}) after the first application. Continued application will cause the rest of the proof to fail.
\item The ``analogous" assumptions have different names, e.g. $p$ and $p'$, thus \textrm{rule p} will not work for (\ref{eq:g2}).
\end{itemize}
Even if the proofs are not identical, they are still captured by the same proof strategy. In fact, the proofs can be described as simple version of the the \emph{mutation} proof strategy developed to reason about functional properties in separation logic \cite{Maclean11}.
Here, we assume the existence of a hypothesis $H$ (i.e. $h$ or $h'$),
with some desirable properties we will return to. The strategy can then be described as:

\vspace{-15pt}
\noindent \begin{tabular}{cc}
\begin{minipage}[c]{0.7\textwidth}
\begin{itemize}
\item continue applying \emph{ax1} while there are no symbols
at the same position in the goal and $H$;
\item then apply \emph{ax2} while the goal does not match $H$ and
there is a fact $P$ which can discharge the condition of \emph{ax2}
\item finally, discharge the goal with $H$.
\end{itemize}
\end{minipage}
&
\begin{minipage}[c]{0.25\textwidth}
\begin{figure}[H]
\begin{center}
\includegraphics[width=0.75\textwidth]{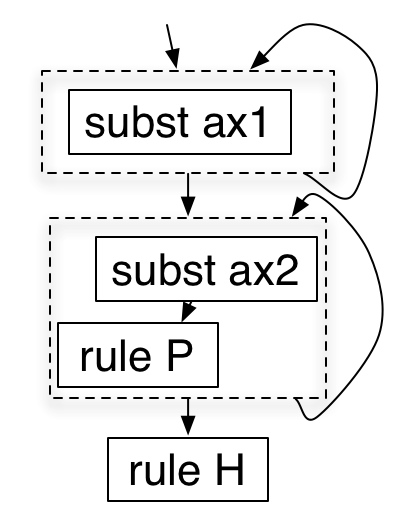}
\end{center}
\vspace{-15pt}
\caption{Mutation}\label{fig:mutation}
\end{figure}
\end{minipage}
\end{tabular} 

The rest of the paper will focus on how we can automatically discover such strategy from the proof shown in Figure \ref{fig:isaproof}. To achieve this a suitable proof strategy representation is required. 
Firstly, as we can see from the strategy, the representation needs to include properties about the \emph{sub-goals/proof states} as well as information about \emph{tactics}. Moreover, sub-goals arising from a tactic application are often treated differently (e.g. the condition arising from the use of \emph{ax2}), thus some ``flow information" is required. From this we argue that
\begin{quote}\it
A graph where the nodes contains the goals, and edges annotated with goal information working as channels for the goals, is a suitable representation to support the automatic generalisation of proof strategies from proofs. 
\end{quote}
Previously, a graph-based language to express proof strategy has  been developed \cite{Grov13}, and we will briefly summarise this in the next section. However, the annotation of goals on edges has not been developed, and developing this is a key challenge in achieving our ambitious goal. A key contribution of this paper is the development of such a {\em goal type} which serves as a specification of a particular goal and can be generalised across proofs. The example has shown that there is vast number of information required which the goal type need to capture:
\begin{itemize}
\item The conclusion to be proven, e.g. $(((c * f) * (d \wedge e)) * b) * a$ initially.
\item The facts available, including local assumptions (such as {\em
    p} and {\em q}), and axioms/lemmas (e.g. \textsf{ax1} and \textsf{ax2}).
\item Properties between facts and the conclusion (or other
  facts). For example,  \textsf{ax2} is applied because the condition of it can be discharged by \textsf{p}.
\item Properties relating goals to tactics; e.g. after applying
  \textsf{ax2} one subgoal is discharged by {\em p} but not the other.
\end{itemize}
Moreover, other information could also be essential for a particular proof strategy, for example: definitions; fixed/shared variables; and variance between steps (e.g. for each step a ``distance" between $h$ and the goal is reduced).

We argue that a language need to be able to capture such properties, and we are not familiar with any proof language which can capture them in a natural way. The development of a \emph{goal type} to capture this is a key contributions and the topic of \S \ref{sec:goaltype}. We then briefly show how this can be utilised when evaluating a conjecture over the strategy in \S \ref{sec:lifting}. The second key contribution of this paper is the topic of \S \ref{sec:generalise}. Here, we utilise the graph language and goal types to generalise a proof into a proof strategy, illustrated by re-discovering the mutation strategy. A key feature here is that we see the goal types as a lattice which can naturally be generalised. This is combined with graph transformations to find common generalisable sub-strategies and loops with termination conditions. We discuss related work and conclude in \S \ref{sec:related} and \S \ref{sec:concl}.



\section{Background on the Proof Strategy Language}\label{sec:background}

The graphical proof strategy language was introduced in \cite{Grov13} built upon the mathematical formalism of 
\textit{string diagrams} \cite{paper:Dixon:10}.
A string diagram consists of \textit{boxes} and \textit{wires}, where the wires are used to connect the boxes. Both boxes and wires can contain data, and data on the edges provides a type-safe mechanism of composing two graphs. Crucially, string diagrams allow \emph{dangling edges}. If such edge has no source, then this becomes and input for the graph, and dually, if an edge as no destination then it is the output of the graph.

In a \emph{proof strategy graph} \cite{Grov13}, the wires are labelled with \emph{goal types}, which is developed in the next section. A box is either a tactic or a list of goals. Such goal boxes
are used for evaluation by propagating them towards the outputs as shown in Figure \ref{fig:eval}.

\begin{figure}[h]
\begin{center}
\scalebox{0.7}{%
\beginpgfgraphicnamed{rewrite_ex}
\begin{tikzpicture}[string graph]
	\begin{pgfonlayer}{nodelayer}
		\node [style=square box] (0) at (-6.25, -0.5) {$t_2$};
		\node [style=none] (1) at (-6.5, -1.5) {};
		\node [style=sg grey vertex] (2) at (-5, 3) {};
		\node [style=none] (3) at (-5.25, 1) {};
		\node [style=none] (4) at (-5, 3.5) {};
		\node [style=none] (5) at (-6, 0) {};
		\node [style=square box] (6) at (-5, 2) {$t_1$};
		\node [style=none] (7) at (-6, -2.5) {};
		\node [style=none] (8) at (-4, 0) {};
		\node [style=none] (9) at (-4, -2.5) {};
		\node [style=none] (10) at (-5, 2.5) {};
		\node [style=none] (11) at (-4.75, 1) {};
		\node [style=none] (12) at (-6, -1) {};
		\node [style=none] (13) at (-7.25, 0.5) {};
		\node [style=none] (14) at (-6.5, 0.5) {};
		\node [style=none] (15) at (-7.25, -1.5) {};
		\node [style=none] (16) at (-3, 0) {$\rewritesto$};
		\node [style=none] (17) at (-6.5, -1) {};
		\node [style=none] (18) at (-5.25, 1.5) {};
		\node [style=none] (19) at (-4.75, 1.5) {};
		\node [style=none] (20) at (-6.5, 0) {};
		\node [style=none] (21) at (1.25, -2.5) {};
		\node [style=none] (22) at (-0.75, -2.5) {};
		\node [style=none] (23) at (0.25, 3.5) {};
		\node [style=none] (24) at (-1.25, 0.5) {};
		\node [style=none] (25) at (0.5, 1) {};
		\node [style=none] (26) at (-1.25, -1) {};
		\node [style=sg grey vertex] (27) at (0.5, 1) {};
		\node [style=none] (28) at (-0.75, 0) {};
		\node [style=none] (29) at (-1.25, 0) {};
		\node [style=none] (30) at (2.25, 0) {$\rewritesto$};
		\node [style=none] (31) at (-2, -1.5) {};
		\node [style=square box] (32) at (-1, -0.5) {$t_2$};
		\node [style=square box] (33) at (0.25, 2) {$t_1$};
		\node [style=none] (34) at (-1.25, -1.5) {};
		\node [style=none] (35) at (0, 1) {};
		\node [style=none] (36) at (0.25, 2.5) {};
		\node [style=none] (37) at (0, 1.5) {};
		\node [style=sg grey vertex] (38) at (0, 1) {};
		\node [style=none] (39) at (0.5, 1.5) {};
		\node [style=none] (40) at (-0.75, -1) {};
		\node [style=none] (41) at (1.25, 0) {};
		\node [style=none] (42) at (-2, 0.5) {};
		\node [style=sg grey vertex] (43) at (6.5, -2) {};
		\node [style=none] (44) at (6.5, -2.5) {};
		\node [style=none] (45) at (4.5, -2.5) {};
		\node [style=none] (46) at (5.5, 3.5) {};
		\node [style=none] (47) at (4, 0.5) {};
		\node [style=none] (48) at (5.75, 1) {};
		\node [style=none] (49) at (4, -1) {};
		\node [style=sg grey vertex] (50) at (4.5, -1.5) {};
		\node [style=none] (51) at (4.5, 0) {};
		\node [style=none] (52) at (4, 0) {};
		\node [style=none] (53) at (7.5, 0) {$\rewritesto$};
		\node [style=none] (54) at (3.25, -1.5) {};
		\node [style=square box] (55) at (4.25, -0.5) {$t_2$};
		\node [style=square box] (56) at (5.5, 2) {$t_1$};
		\node [style=none] (57) at (4, -1.5) {};
		\node [style=none] (58) at (5.25, 1) {};
		\node [style=none] (59) at (5.5, 2.5) {};
		\node [style=none] (60) at (5.25, 1.5) {};
		\node [style=sg grey vertex] (61) at (4, -1.5) {};
		\node [style=none] (62) at (5.75, 1.5) {};
		\node [style=none] (63) at (4.5, -1) {};
		\node [style=none] (64) at (6.5, 0) {};
		\node [style=none] (65) at (3.25, 0.5) {};
		\node [style=sg grey vertex] (66) at (11.75, -2) {};
		\node [style=none] (67) at (11.75, -2.5) {};
		\node [style=none] (68) at (9.75, -2.5) {};
		\node [style=none] (69) at (10.75, 3.5) {};
		\node [style=none] (70) at (9.25, 0.5) {};
		\node [style=none] (71) at (11, 1) {};
		\node [style=none] (72) at (9.25, -1) {};
		\node [style=sg grey vertex] (73) at (9.75, -2) {};
		\node [style=none] (74) at (9.75, 0) {};
		\node [style=none] (75) at (9.25, 0) {};
		\node [style=none] (76) at (12.75, 0) {$\rewritesto$};
		\node [style=none] (77) at (8.5, -1.5) {};
		\node [style=square box] (78) at (9.5, -0.5) {$t_2$};
		\node [style=square box] (79) at (10.75, 2) {$t_1$};
		\node [style=none] (80) at (9.25, -1.5) {};
		\node [style=none] (81) at (10.5, 1) {};
		\node [style=none] (82) at (10.75, 2.5) {};
		\node [style=none] (83) at (10.5, 1.5) {};
		\node [style=sg grey vertex] (84) at (9.25, 0.5) {};
		\node [style=none] (85) at (11, 1.5) {};
		\node [style=none] (86) at (9.75, -1) {};
		\node [style=none] (87) at (11.75, 0) {};
		\node [style=none] (88) at (8.5, 0.5) {};
		\node [style=none] (89) at (13.75, 0.5) {};
		\node [style=none] (90) at (14.5, 0) {};
		\node [style=none] (91) at (15.75, 1.5) {};
		\node [style=none] (92) at (15, 0) {};
		\node [style=sg grey vertex] (93) at (17, -2) {};
		\node [style=sg grey vertex] (94) at (15, -1.5) {};
		\node [style=none] (95) at (17, 0) {};
		\node [style=none] (96) at (16, 2.5) {};
		\node [style=square box] (97) at (16, 2) {$t_1$};
		\node [style=none] (98) at (16.25, 1) {};
		\node [style=none] (99) at (16.25, 1.5) {};
		\node [style=sg grey vertex] (100) at (15, -2) {};
		\node [style=none] (101) at (15.75, 1) {};
		\node [style=square box] (102) at (14.75, -0.5) {$t_2$};
		\node [style=none] (103) at (15, -1) {};
		\node [style=none] (104) at (14.5, -1) {};
		\node [style=none] (105) at (17, -2.5) {};
		\node [style=none] (106) at (15, -2.5) {};
		\node [style=none] (107) at (14.5, -1.5) {};
		\node [style=none] (108) at (16, 3.5) {};
		\node [style=none] (109) at (14.5, 0.5) {};
		\node [style=none] (110) at (13.75, -1.5) {};
	\end{pgfonlayer}
	\begin{pgfonlayer}{edgelayer}
		\draw [style=diredge] (4.center) to (10.center);
		\draw [style=diredge, in=90, out=-90, looseness=1.25] (3.center) to (5.center);
		\draw [style=diredge, in=90, out=-90, looseness=1.00] (12.center) to (7.center);
		\draw [in=90, out=-90, looseness=1.25] (11.center) to (8.center);
		\draw [style=diredge] (8.center) to (9.center);
		\draw [in=-90, out=-90, looseness=2.00] (1.center) to (15.center);
		\draw (15.center) to (13.center);
		\draw [in=90, out=90, looseness=2.00] (13.center) to (14.center);
		\draw (17.center) to (1.center);
		\draw (18.center) to (3.center);
		\draw (19.center) to (11.center);
		\draw [style=diredge] (14.center) to (20.center);
		\draw [style=diredge] (23.center) to (36.center);
		\draw [style=diredge, in=90, out=-90, looseness=1.25] (35.center) to (28.center);
		\draw [style=diredge, in=90, out=-90, looseness=1.00] (40.center) to (22.center);
		\draw [in=90, out=-90, looseness=1.25] (25.center) to (41.center);
		\draw [style=diredge] (41.center) to (21.center);
		\draw [in=-90, out=-90, looseness=2.00] (34.center) to (31.center);
		\draw (31.center) to (42.center);
		\draw [in=90, out=90, looseness=2.00] (42.center) to (24.center);
		\draw (26.center) to (34.center);
		\draw (37.center) to (35.center);
		\draw (39.center) to (25.center);
		\draw [style=diredge] (24.center) to (29.center);
		\draw [style=diredge] (46.center) to (59.center);
		\draw [style=diredge, in=90, out=-90, looseness=1.25] (58.center) to (51.center);
		\draw [style=diredge, in=90, out=-90, looseness=1.00] (63.center) to (45.center);
		\draw [in=90, out=-90, looseness=1.25] (48.center) to (64.center);
		\draw [style=diredge] (64.center) to (44.center);
		\draw [in=-90, out=-90, looseness=2.00] (57.center) to (54.center);
		\draw (54.center) to (65.center);
		\draw [in=90, out=90, looseness=2.00] (65.center) to (47.center);
		\draw (49.center) to (57.center);
		\draw (60.center) to (58.center);
		\draw (62.center) to (48.center);
		\draw [style=diredge] (47.center) to (52.center);
		\draw [style=diredge] (69.center) to (82.center);
		\draw [style=diredge, in=90, out=-90, looseness=1.25] (81.center) to (74.center);
		\draw [style=diredge, in=90, out=-90, looseness=1.00] (86.center) to (68.center);
		\draw [in=90, out=-90, looseness=1.25] (71.center) to (87.center);
		\draw [style=diredge] (87.center) to (67.center);
		\draw [in=-90, out=-90, looseness=2.00] (80.center) to (77.center);
		\draw (77.center) to (88.center);
		\draw [in=90, out=90, looseness=2.00] (88.center) to (70.center);
		\draw (72.center) to (80.center);
		\draw (83.center) to (81.center);
		\draw (85.center) to (71.center);
		\draw [style=diredge] (70.center) to (75.center);
		\draw [style=diredge] (108.center) to (96.center);
		\draw [style=diredge, in=90, out=-90, looseness=1.25] (101.center) to (92.center);
		\draw [style=diredge, in=90, out=-90, looseness=1.00] (103.center) to (106.center);
		\draw [in=90, out=-90, looseness=1.25] (98.center) to (95.center);
		\draw [style=diredge] (95.center) to (105.center);
		\draw [in=-90, out=-90, looseness=2.00] (107.center) to (110.center);
		\draw (110.center) to (89.center);
		\draw [in=90, out=90, looseness=2.00] (89.center) to (109.center);
		\draw (104.center) to (107.center);
		\draw (91.center) to (101.center);
		\draw (99.center) to (98.center);
		\draw [style=diredge] (109.center) to (90.center);
	\end{pgfonlayer}
\end{tikzpicture}}
\endpgfgraphicnamed}
\end{center}
\caption{Evaluation of goals \cite{Grov13}. Goal nodes, represented as circles, are evaluated by propagating them over tactics from inputs to the outputs. An edge works as a channel, and can contain many goals, however, a tactic works on one goal at a time.}
\label{fig:eval}
\end{figure}
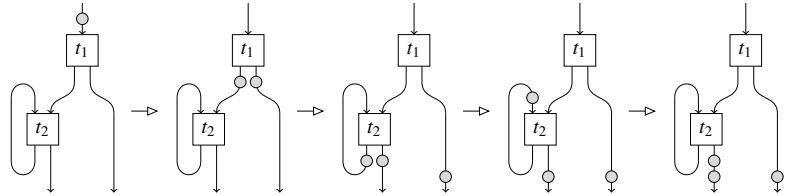

There are two types of tactics. The first type is known as a 
\emph{graph tactic}, which is simple a node holding one or more graphs which be unfolded. This is used to introduce hierarchies to 
enhance readability. A second usage, in the case it holds more then one child graphs, is to represent branching in the search space, as there multiple ways of unfolding such graph. Note however that,
as explained in \cite{Grov13}, graph tactics are evaluated in-place and are thus not unfolded first.

The other type of tactic is an \emph{atomic tactic}. This
corresponds to a tactic of the underlying theorem prover. 
Here, we here assume works on a \textit{proof state} (containing named hypothesis, the open conjecture, fixed variables etc).
When evaluated, such tactic turn a proof state (goal) into a list of new proof states (sub-goals). Since this may also involve search it is returns a set of such list of proof state, thus it has the type
$$
\textit{proof state} \to \{ [\textit{proof state}] \}
$$ 
Here, for a type $\tau$, $[\tau]$ is the type of finite lists of $\tau$ and $\{ \tau \}$ is the type of finite sets whose elements are of type $\tau$. 

For this paper, we assume two atomic tactics: \textsf{subst} $\langle arg \rangle$ and \textsf{rule} $\langle arg \rangle$, which performs a single substitution or resolution step,
respectively. Here, $\langle arg \rangle$ may be both a single rule or a set of rules (all of them are then attempted). It can also be a
\emph{description} of a set of rules, which we call a \emph{class} and is introduced in the next section.

In order to apply an atomic tactic in the strategy language, it has to be \emph{typed} with goal types, also introduced next.
Let $\alpha$ and the $\beta_i$ represent goal type variables. A \emph{typed tactic} is then a function of the form:
$$
\alpha \to \{ [\beta_1] \times [\beta_2] \times \ldots \times [\beta_n] \}
$$
This type has to be reflected in our representation of goal nodes,
which we will return in \S \ref{sec:lifting} after we have developed our notion of \emph{goal types}, which is the next topic.

\section{Towards a Theory of Goal Types}\label{sec:goaltype}





\subsection{Classes} \label{sec:classes}

A goal type must be able to capture the intuition of the user,
potentially using all the information listed in \S
\ref{sec:intro}. This information is then used to guide the proof and
send sub-goals to the correct tactic. To achieve this we firstly need
to capture important properties
of the conclusion of the conjecture. Next, it is important to note that, in general, most of the information available is not relevant, inclusion of it will act as noise (and increase the chance of ``over-fitting" a strategy to a particular proof). 
Thus, we need to be able to separate  
the wheat from the chaff, and capture \emph{properties of the
  `relevant' facts}, where \emph{facts} refer to both lemmas/axioms,
and assumptions which are local to the conclusion. Henceforth we will
term a fact or conclusion an \emph{element}. There are a large set of
such element properties, e.g.: 
\begin{itemize}
\item a particular shape or sub-shape;
\item the symbols used, or symbols at particular positions (e.g. top symbol);
\item certain types of operators are available, e.g. (\ref{eq:g1}) contains associative-commutative operators;
\item the element contains variables we can apply induction to or (shared) meta-variables;
\item certain rules are applicable; 
\item the element's origin, e.g. it is from group theory or it is a property of certain operator.
\end{itemize}
This list is by no means complete, and here we will focus
on two such properties:
\begin{itemize}
\item \textbf{top\_symbol}  describes the top level symbol;
\item \textbf{has\_symbol}  describes the symbols it must contain.
\end{itemize}
Each such \emph{feature} will have data associated:
$$
\textit{data} := \textit{int} ~|~ \textit{term} ~|~ \textit{position}
~| ~\textit{boolean}
$$
where \emph{term} refers to the term of the underlying logic, and 
a \emph{position} refers to an index of a term tree. A \emph{class} describes a family of elements where certain
such features hold. A class, for example, could be a conclusion or a
hypothesis, for which certain properties hold. 
\begin{definition}
A \emph{class} is a map
$$
\textit{class} := \textit{name} \stackrel{m}{\rightarrow} [[\textit{data}]]
$$
such that for each \textit{name} in the domain of a class, there is an associated predicate
on an element, termed the \emph{matcher}.  There are two special cases where the predicates always
succeeds or always fails on certain data, denoted by $\top_f$ and $\bot_f$ as described
below. A class \emph{matches} to a conclusion/fact if the
predicate on each element holds.
\end{definition}
The intuition behind the list of lists of data is that it represent a property in DNF form, e.g.
$[[a,b],[c]]$, which is equivalent to $(a \wedge b) \vee c$. For the conjecture in (\ref{eq:g1}), 
$\{(\textit{top\_symbol} \mapsto [[*]]),(\textit{has\_symbol} \mapsto [[*,\wedge]]) \}$ identifies the
conclusion, while $\{(\textit{has\_symbol} \mapsto [[\textit{pure}]])\}$ identifies the first assumption, 
but not the second, and $\{(\textit{has\_symbol} \mapsto [[\textit{pure}],[*]])\}$ captures both assumptions
and the goal. We call this a {\em semantic representation} of the
data. 

We write the constant space of feature names as $\mathcal{N}$ and, for a class
$C$, with $n \in \mathcal{N}$, $C(n)$ is the data associated with
feature $n$ for class $C$. We define the semantic representation of the
data for a particular feature in a class using the notation
$x^s$ for some data $x$. By semantic representation, we mean that the
structure of the list of data is mapped to a representation about
which we can reason -- for example, above where a list of lists of
data represents a formula in DNF. It is then possible
to reason about this data. For example, for the feature $has\_symbol$
in the conjecture (\ref{eq:g1}) we write for $C(has\_symbol)^s$:
\begin{eqnarray}
&\hspace{-0.3cm}[[a_1 \cdots a_m], \cdots , [b_1 \cdots b_n]]^s =
((\sem{a_1} \cap \cdots \cap \sem{a_m}) \cup \cdots \cup (\sem{b_1}
\cap \cdots \cap \sem{b_n}))& \label{eq:sem}
\end{eqnarray}
\noindent where $\sem{a}$ denotes $a$ as an atom.

Classes form a bounded lattice $(C,\vee,\wedge,\top,\bot)$, on which we can define a meet and a
join. We show how to compute the join ($\wedge$: least upper bound),
and meet ($\vee$: greatest lower
bound) for two classes $C_1$ and $C_2$. We define the most general
class as $\top$ and the empty class as $\bot$. We write the most
general element of $C(f)$ as $\top_f$ and the least general to be $\bot_f$.
\begin{definition}
$C_1 \wedge C_2$ is the greatest lower bound of $C_1$ and $C_2$ if
$\forall n \in \mathcal{N}. (C_1 \wedge C_2)(n) = C_1(n) \wedge_n
C_2(n)$, where $\wedge_n$ computes the greatest lower bound for
feature $n$.
\end{definition}
\begin{definition}
$C_1 \vee C_2$ is the least upper  bound of $C_1$ and $C_2$ if
$\forall n \in \mathcal{N}. (C_1 \vee C_2)(n) = C_1(n) \vee_n
C_2(n)$, where $\vee_n$ computes the least upper bound for
feature $n$.
\end{definition}
For $f = top\_symbol$ or $f = has\_symbol$ we
define $\wedge_f$ and $\vee_f$ as:
\begin{definition} \label{defn:meet} 
$C_1(f) \wedge_f  C_2(f) := \; C_1(f)^s \cap C_2(f)^s$
and $C_1(f) \vee_f  C_2(f) := \; C_1(f)^s \cup C_2(f)^s$
\end{definition}
We further define $\top_f^s$ and $\bot_f^s$ to be $\mathbb{U}$ (the
universal set) and
    $\emptyset$ respectively.  
To show that classes form a partial order, we prove the
following properties about meet and joint:
\begin{theorem}
$\wedge$ and $\vee$ are commutative and associative operations.
\end{theorem}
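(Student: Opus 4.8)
The plan is to reduce the statement about classes to the corresponding statement about their underlying semantic representations, and then invoke the well-known fact that set intersection and set union are themselves commutative and associative. Concretely, by Definitions of $\wedge$ and $\vee$ on classes, two classes are equal precisely when they agree feature-by-feature on every $n \in \mathcal{N}$, so it suffices to establish commutativity and associativity of each $\wedge_n$ and each $\vee_n$. For the two features we have committed to, $top\_symbol$ and $has\_symbol$, Definition~\ref{defn:meet} tells us that $\wedge_f$ and $\vee_f$ are computed as $\cap$ and $\cup$ on the semantic representations $C(f)^s$, so the result follows immediately from the algebraic laws for $\cap$ and $\cup$ on the ambient universe $\mathbb{U}$.

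First I would unfold the definitions: fix classes $C_1, C_2, C_3$ and a feature name $n \in \mathcal{N}$. I would observe that $(C_1 \wedge C_2)(n) = C_1(n) \wedge_n C_2(n)$ and similarly for three-fold combinations, so that the class-level identities $C_1 \wedge C_2 = C_2 \wedge C_1$ and $(C_1 \wedge C_2) \wedge C_3 = C_1 \wedge (C_2 \wedge C_3)$ hold iff $\wedge_n$ is commutative and associative for every $n$. Second, for $n \in \{top\_symbol, has\_symbol\}$ I would pass to semantic representations and note $C_1(n) \wedge_n C_2(n) = C_1(n)^s \cap C_2(n)^s$; since $\cap$ is commutative and associative as an operation on subsets of $\mathbb{U}$, so is $\wedge_n$, and dually $\vee_n$ inherits commutativity and associativity from $\cup$. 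The same argument handles the special data $\top_n$ and $\bot_n$, whose semantic representations $\mathbb{U}$ and $\emptyset$ are absorbed correctly by $\cap$ and $\cup$.

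The main obstacle is really one of generality rather than difficulty: the theorem is stated for arbitrary features in $\mathcal{N}$, but the paper has only defined $\wedge_n$ and $\vee_n$ explicitly for $top\_symbol$ and $has\_symbol$. So the honest version of the proof has to either (i) restrict attention to these two features, or (ii) make explicit the standing assumption that \emph{every} feature's $\wedge_n$ and $\vee_n$ is required to be the meet and join of a bounded lattice on that feature's data, in which case commutativity and associativity hold by definition of a lattice operation. I would take route (ii), since it keeps the framework extensible: one would add a sentence stating that each feature $n$ comes equipped with a bounded lattice structure on $[[\textit{data}]]/{\sim}$ (quotiented by equality of semantic representations), and then the theorem is just the observation that meet and join in any lattice are commutative and associative, transported pointwise over $\mathcal{N}$ to classes. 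The set-theoretic computation for the two concrete features then serves as a worked instance confirming the hypothesis is satisfiable.
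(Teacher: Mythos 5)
Your proof takes essentially the same route as the paper's: reduce the class-level claim pointwise to each feature $n \in \mathcal{N}$, then observe that for the defined features $\wedge_f$ and $\vee_f$ are $\cap$ and $\cup$ on semantic representations, which are commutative and associative. Your additional remark --- that the theorem as stated quantifies over all of $\mathcal{N}$ while only two features are defined, so one should either restrict to those or posit a lattice structure on every feature's data --- is a fair and useful sharpening of a gap the paper itself only gestures at with ``In our example we use Definition~\ref{defn:meet}.''
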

\begin{proof}
  It suffices to prove that $\wedge_f$ and $\vee_f$ commutative and
  associative for each $f \in \mathcal{N}$. In our example we use
  Definition \ref{defn:meet}. This is
  provable since $\cap$ and $\cup$ are commutative, associative and
  idempotent operations in set theory.
\end{proof}
\begin{theorem}
$\wedge$ and $\vee$ follow the absorption laws $a \vee (a \wedge b) = a$,
and $a \wedge (a \vee b) = a$. 
\end{theorem}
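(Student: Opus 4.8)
The plan is to follow the same reduction used for the previous theorem. Since $\wedge$ and $\vee$ on classes are defined pointwise over the feature names $\mathcal{N}$ (the definitions of $C_1 \wedge C_2$ and $C_1 \vee C_2$ quantify over all $n \in \mathcal{N}$), the identities $a \vee (a \wedge b) = a$ and $a \wedge (a \vee b) = a$ hold for classes exactly when, for every $f \in \mathcal{N}$, the feature-level operations satisfy $C_1(f) \vee_f (C_1(f) \wedge_f C_2(f)) = C_1(f)$ and $C_1(f) \wedge_f (C_1(f) \vee_f C_2(f)) = C_1(f)$. So it suffices to establish absorption feature by feature.

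First I would handle the two features made explicit in the paper, $top\_symbol$ and $has\_symbol$, for which $\wedge_f$ and $\vee_f$ are given by Definition \ref{defn:meet} in terms of the semantic representation: $C_1(f) \wedge_f C_2(f) = C_1(f)^s \cap C_2(f)^s$ and $C_1(f) \vee_f C_2(f) = C_1(f)^s \cup C_2(f)^s$. Writing $X = C_1(f)^s$ and $Y = C_2(f)^s$, the two feature-level absorption equations become $X \cup (X \cap Y) = X$ and $X \cap (X \cup Y) = X$, which are precisely the absorption laws of the powerset lattice of $\mathbb{U}$ -- standard set theory. The boundary cases $\top_f$ and $\bot_f$ need no separate treatment: their semantic representations are $\mathbb{U}$ and $\emptyset$, so absorption in the powerset lattice already covers them as particular values of $X$ or $Y$.

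The one point that needs care -- and the only non-routine step I anticipate -- is that $\wedge_f$ and $\vee_f$ as defined deliver a subset of $\mathbb{U}$, whereas a class stores data as a list of lists of $\textit{data}$; so an equality such as ``$\,= C_1(f)$'' must be read up to the semantic representation $(\cdot)^s$, i.e. two feature values are identified when they denote the same set. I would therefore state the lemma as working modulo semantic equivalence (equivalently, fix a canonical DNF normal form and argue on it), and note that $(\cdot)^s$ is a lattice homomorphism onto $(\mathcal{P}(\mathbb{U}), \cup, \cap)$; absorption then transfers back along this identification. With that convention in place the argument is exactly the set-theoretic one, and it generalises verbatim to any further feature whose $\wedge_f$ and $\vee_f$ are realised as meet and join in some bounded lattice of semantic denotations, which is the intended design for the whole of $\mathcal{N}$.
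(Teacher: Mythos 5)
Your proposal is correct and takes essentially the same route as the paper: reduce absorption to the feature level via the pointwise definitions of $\wedge$ and $\vee$, then observe that for the features defined through Definition \ref{defn:meet} the claim is just the absorption law of $\cap$ and $\cup$ on subsets of $\mathbb{U}$. Your additional remark that the equality must be read modulo the semantic representation $(\cdot)^s$ is a detail the paper's one-line proof glosses over, but it does not change the argument.
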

\begin{proof}
It suffices to prove that $\wedge_f$ follow the absorption laws. This
follows from the fact that $\cap$ and $\cup$ are set theoretic
operations. It also follows that $\wedge$ and $\vee$ are idempotent;
$a \wedge a =a$, $a \vee a = a$. 
\end{proof}
Since $\bot$ is $\emptyset$ and $\top$ is $\mathbb{U}$, it is trivial
to show that $C \vee \bot = C$ and $C \wedge \top = C$ for a 
class $C$. Thus, a class form form a bounded lattice.

\textit{Orthogonality} is a key property to reduce non-determinism during
evaluation of a strategy, whilst {\em subtyping} of classes is a key feature for our generalisation techniques discussed in \S \ref{sec:generalise}:
\begin{definition} 
  $C_1$ and $C_2$ are \emph{orthogonal} if $\exists f \in
  \mathcal{N}. C_1(f) \wedge C_2(f) = \bot_f$. We write this as
  $C_1 \bot C_2$. $C_1$ is a \emph{subtype} of $C_2$, written $C_1 ~\subtype~C_2$, if $\forall f \in
\mathcal{N}. \;(C_1(f) \wedge C_2(f)) = C_1(f)$.
\end{definition}
As an example, consider a goal class with features {\em has\_symbol}
and {\em top\_symbol}:
\begin{eqnarray}
C_1:&\{(\textit{top\_symbol} \mapsto [[*]]),(\textit{has\_symbol}
\mapsto [[*,\wedge],[\vee,*]])\}& \nonumber\\
C_2:&\{(\textit{top\_symbol} \mapsto [[\wedge]]),(\textit{has\_symbol}
\mapsto [[*,\wedge],[\vee,*]])\}& \label{eq:c2ex}\\
C_3:&\{(\textit{top\_symbol} \mapsto [[*]]),(\textit{has\_symbol} \mapsto
[[*,\wedge,\vee]])\}& \label{eq:c3ex}
\end{eqnarray}
\noindent $C_2 \bot C_3$ as there is a feature ($top\_symbol$) for
which $C_2(f) \bot_f C_3(f)$, since by the semantics $\sem{\wedge} \cap \sem{*} = \emptyset$.
In order to determine whether $C_1$ is a subtype of $C_3$ we must show
that $(C_1(f) \wedge C_3(f)) = C_3(f)$ for all features. Using
definition \ref{defn:meet} we must prove for $has\_symbol$:
$$
((\sem{*} \cap \sem{\wedge}) \cup (\sem{\vee} \cap \sem{*})) \cap
(\sem{*} \cap \sem{\wedge} \cap \sem{\vee}) = (\sem{*} \cap \sem{\wedge} \cap \sem{\vee})
$$
\noindent which is true and the same for $top\_symbol$ which in this
case follows trivially.

\vspace*{-0.1cm}
\subsection{Links} \label{sec:linkclasses}

A \emph{class} identifies a cluster of elements with certain common properties. However, certain 
types of properties are \emph{between} elements -- e.g.  a conditional
fact can only be applied if the condition can be discharged. Moreover, certain properties rely
on information pertaining to previous nodes in the proof tree, e.g. a measure has to be reduced in a 
rewriting step to ensure termination.
Such properties include;
\begin{itemize}
\item common symbols between two elements, or the position they are at;
\item common shapes between two elements;
\item embedding of one element into another;
\item some form of difference between elements
\item some sort of measure reduces/increases between elements;
\end{itemize}
We call such properties \emph{links}. Moreover, we abstract links to make them relations
between \emph{classes} rather than between elements. Links are given an \emph{existential}
meaning: a link between two classes entails that there exists elements
in them such that a property holds. In addition, we introduce a \emph{parent} function on links to refer to the parent node. 
The meaning of this will become clearer in the next section, where we
discuss evaluation.
\begin{definition}
A \emph{link} is a map
$$
\textit{link} := \textit{name} \times
\textit{class} \times \textit{class} \stackrel{m}{\rightarrow} [[data]] 
$$
such that for each \textit{name} $n$ in the domain of a link, 
there is an associated predicate $n : [[\emph{data}]] \times \emph{element} \times \emph{element} \rightarrow \mathbb{B}$
called a \emph{matcher}. A link \emph{matches} to a conclusion/fact if the
predicate on each element holds.
\end{definition}
We write the constant space of link names as $\mathcal{N}_L$ and for a
link $L$, with $n \in \mathcal{N}_L$, $L(n,C_1,C_2)$ is the data
associated with feature $n$, classes $C_1$ and $C_2$, for link
$L$. 

We will only consider the link features \textbf{is\_match} and \textbf{symb\_at\_pos} for this exposition. The data of the former
are booleans in DNF, and its matcher succeeds if the result of an exact match between the elements is the same as the semantic value
of the data. The data of the latter is lists of position, where for example
$$
\{(\textit{symb\_at\_pos} , C_1 ,  C_2) \mapsto [[pos]]\}
$$
\noindent states that there exists elements of classes $C_1$ and $C_2$ where the symbol at position $pos$ is the same. To state that there is no position
where this is the case, we introduce an element $\bot_f$ for each $f \in \mathcal{N}_L$, as we did with classes.  In general,
there will be more complicated links, with more complicated
output data values. Defining these is ongoing work.

In order to define orthogonality and subtyping we
define the meet and join for each name in $\mathcal{N}_L$.
\begin{definition}
$L_1 \wedge L_2$ is the greatest lower bound of $L_1$ and $L_2$ if
$\forall n \in \mathcal{L}_N. (L_1 \wedge L_2)(n) = L_1(n) \wedge_n
L_2(n)$, where $\wedge_n$ computes the greatest lower bound for link
feature $n$.
\end{definition}
\begin{definition}
$L_1 \vee L_2$ is the least upper  bound of $L_1$ and $L_2$ if
$\forall n \in \mathcal{L}_N. (L_1 \vee L_2)(n) = L_1(n) \vee_n
L_2(n)$, where $\vee_n$ computes the least upper bound for link
feature $n$.
\end{definition} 
As with classes, we introduce a semantic representation for the links
using notation $x^s$ for some data $x$. Since the
data is a list of lists of positions, we use the same semantics as in
(\ref{eq:sem}). The intuition is that we should be able to generalise
the link class to account for the same symbol to  exist at multiple positions
within the hypothesis and conclusions. The proofs and definitions of the lattice theory follow
similarly to  those for classes.

We then define \emph{orthogonality} and \emph{subtyping} for links:
\begin{definition} 
   $L_1$ and $L_2$ are \textit{orthogonal} if $\exists f \in
  \mathcal{L}_N. L_1(f) \wedge L_2(f) = \bot_f$. We write this as
  $L_1 \bot L_2$
$L_1$ is a \emph{subtype} of $L_2$, written $L_1 ~\subtype~L_2$, if $\forall f \in
\mathcal{L}_N. \;L_1(f) \wedge L_2(f) = L_1(f)$.
\end{definition}


\vspace*{-0.4cm}
\subsection{Goal Types} \label{sec:goaltypes}

A goal type is a description of the conclusion, the related facts, 
and the links between them:
\begin{definition}
A goal type is a record: 
{\it \begin{tabbing}  
\qquad GoalType := \=  \{ link : link,  facts : \{ class \} ,  concl : class \}
\end{tabbing}}
\noindent where \emph{concl} is the class describing the conclusion of a goal, \emph{facts} is a set of classes of relevant facts, and \emph{link} is a link relating classes of \emph{facts} and \emph{concl}. 
\end{definition}
Note that we keep a set of \emph{classes} of \emph{facts} to account
for specifying the existence of multiple classes of hypotheses. For example, in the our example conjecture, hypothesis $p$ forms a class $P$ (with \emph{top\_symbol} $pure$), while $h$ forms a class $H$ (with \emph{has\_symbols} $[[\wedge,*]]$). Henceforth we assume that all members of \emph{facts} are orthogonal -- dealing with the general case which
allows overlapping is future work.
Orthogonality and subtyping of two goal types reduces to orthogonality of their respective classes. Due to the assumptions
of orthogonality between the facts, they have an universal interpretation for $\bot$ and an existential interpretation for $\subtype$ 
\begin{tabbing}
$\qquad G_1 ~\bot~ G_2$  \= := \= $G_1(\textit{concl}) ~\bot~ G_2(\textit{concl}) ~\vee~ G_1(\textit{link})
~\bot~ G_2(\textit{link})~ \vee$ \\
\>\> $\forall f_1 \in G_1(\textit{fact}),f_2 \in G_2(\textit{fact}). f_1 ~\bot~ f_2$ \\
$\qquad G_1 ~\subtype~ G_2$ \>:= \>$G_1(\textit{concl})~\subtype~ G_2(\textit{concl}) ~\wedge~G_1(\textit{link})~\subtype~ G_2(\textit{link}) ~\wedge$ \\
\>\> $\exists f_1 \in G_1(\textit{fact}),f_2 \in G_2(\textit{fact}). f_1 ~\subtype~ f_2$
\end{tabbing} 

\section{Lifting of Goals and Tactics}\label{sec:lifting}


Here, we will briefly outline how evaluation is achieved with the goal type introduced. Firstly, recall from Figure \ref{fig:eval} that a single evaluation step is achieved by a tactic by consuming the input goal node on the input and produce the resulting sub-goals on the correct output edges. Since a goal nodes contains list of goals, this can be captured by meta graphical rewrite-rule shown in Figure
\ref{fig:evalrule}. The details are given in \cite{Grov13}, but one evaluation step works as follows:
\begin{center}
\vspace{-12pt}
\noindent \begin{tabular}{cc}
\begin{minipage}{0.45\textwidth}
\begin{enumerate}
  \item Match and partly instantiate the LHS of the meta-rule.
  \item Evaluate the tactic function for the matched input and output types.
  \item Finish instantiating the RHS with the lists $gs_i$ from the tactic.
  \item Apply the fully instantiated rule(s).
\end{enumerate}
\end{minipage}
&
\begin{minipage}{0.52\textwidth}
\vspace{-10pt}
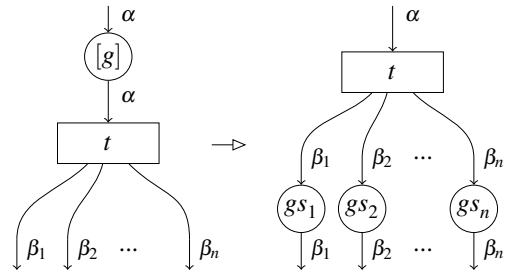
\begin{figure}[H]
\begin{center}
\scalebox{0.9}{%
\beginpgfgraphicnamed{goal_eval}
\begin{tikzpicture}[string graph]
	\begin{pgfonlayer}{nodelayer}
		\node [style=none, wire label] (0) at (-2.5, 3.25) {$\alpha$};
		\node [style=labelled sg vertex, inner sep=1 pt] (1) at (-3, 2.25) {$[g]$};
		\node [style=square box, minimum width=1.5 cm, yshift=0.7 mm] (2) at (-3, 0) {$t$};
		\node [style=none] (3) at (-3, 3.5) {};
		\node [style=none] (4) at (-5.25, -2.25) {};
		\node [style=none] (5) at (-1, -2.25) {};
		\node [style=none] (6) at (-4, -2.25) {};
		\node [style=none, wire label] (7) at (-4.75, -2.5) {$\beta_1$};
		\node [style=none, wire label] (8) at (-3.5, -2.5) {$\beta_2$};
		\node [style=none, wire label] (9) at (-0.5, -2.5) {$\beta_n$};
		\node [style=none] (10) at (-2.5, -2.5) {...};
		\node [style=none] (11) at (0, 0) {$\rewritesto$};
		\node [style=labelled sg vertex] (12) at (6, -1.5) {$\textit{gs}_n$};
		\node [style=labelled sg vertex] (13) at (1.75, -1.5) {$\textit{gs}_1$};
		\node [style=labelled sg vertex] (14) at (3.25, -1.5) {$\textit{gs}_2$};
		\node [style=none] (15) at (3.25, -3) {};
		\node [style=none] (16) at (6, -3) {};
		\node [style=none] (17) at (1.75, -3) {};
		\node [style=none, wire label] (18) at (6.5, -0.25) {$\beta_n$};
		\node [style=none] (19) at (6, -0.25) {};
		\node [style=none] (20) at (4.75, -0.25) {...};
		\node [style=square box, minimum width=1.5 cm, yshift=0.7 mm] (21) at (4, 1.75) {$t$};
		\node [style=none] (22) at (1.75, -0.25) {};
		\node [style=none, wire label] (23) at (3.75, -0.25) {$\beta_2$};
		\node [style=none] (24) at (3.25, -0.25) {};
		\node [style=none, wire label] (25) at (2.25, -0.25) {$\beta_1$};
		\node [style=none, wire label] (26) at (3.75, -2.5) {$\beta_2$};
		\node [style=none, wire label] (27) at (6.5, -2.5) {$\beta_n$};
		\node [style=none, wire label] (28) at (2.25, -2.5) {$\beta_1$};
		\node [style=none] (29) at (4.75, -2.5) {...};
		\node [style=none] (30) at (-4, -3) {};
		\node [style=none] (31) at (-5.25, -3) {};
		\node [style=none] (32) at (-1, -3) {};
		\node [style=none, wire label] (33) at (-2.5, 1.25) {$\alpha$};
		\node [style=none] (34) at (4, 3.5) {};
		\node [style=none, wire label] (35) at (4.5, 3.25) {$\alpha$};
	\end{pgfonlayer}
	\begin{pgfonlayer}{edgelayer}
		\draw [style=diredge] (3.center) to (1);
		\draw [style=diredge, in=90, out=-90] (1) to (2);
		\draw [in=90, out=-135] (2) to (4.center);
		\draw [in=90, out=-45, looseness=1.25] (2) to (5.center);
		\draw [in=90, out=-105] (2) to (6.center);
		\draw [style=diredge] (13) to (17.center);
		\draw [style=diredge] (14) to (15.center);
		\draw [style=diredge] (12) to (16.center);
		\draw [in=90, out=-135] (21) to (22.center);
		\draw [in=90, out=-45] (21) to (19.center);
		\draw [in=90, out=-105] (21) to (24.center);
		\draw [style=diredge] (22.center) to (13);
		\draw [style=diredge] (24.center) to (14);
		\draw [style=diredge] (19.center) to (12);
		\draw [style=diredge] (4.center) to (31.center);
		\draw [style=diredge] (6.center) to (30.center);
		\draw [style=diredge] (5.center) to (32.center);
		\draw [style=diredge] (34.center) to (21);
	\end{pgfonlayer}
\end{tikzpicture}}
\endpgfgraphicnamed}
\end{center}
\caption{Evaluation meta-rule}\label{fig:evalrule}
\end{figure}
\vspace{-10pt}
\end{minipage}
\end{tabular}
\vspace{-7pt}
\end{center}
where $\alpha$ and $\beta_i$ are goal type variables. We assume $t$
is an atomic tactic, but this is trivial to extend to graph tactics. Further note that there are additional rules to split a list into a sequence of singleton lists and delete empty list nodes. For more details we refer to \cite{Grov13}.

In the second step of this algorithm, the underlying tactic has to be lifted from $\textit{proof state} \to \{ [\textit{proof state}] \}$ to the form 
$
\alpha \to \{ [\beta_1] \times [\beta_2] \times \ldots \times [\beta_n] \}.
$

First we need to introduce a \emph{goal}. This can be seen as an instance of a goal type for a particular proof state:
\begin{definition}
A goal is a record:
$$ goal := \{ fmap : \textit{class} \stackrel{m}{\rightarrow} \{ \textit{fact} \},~ \textit{ps} : \textit{proof state},~\textit{parent} : \{\textit{goal}\} \}
$$
\end{definition}
\noindent where \emph{parent} is either a singleton or empty set -- empty if this
is the first goal. Type checking relies on the ``typing predicates" associated with classes and links. A goal $g$ is of type $G$, iff
\begin{itemize}
\item The conclusion in $g(ps)$ matches $G(concl)$.
\item For each class $c \in G(facts)$, $g(fmap)(c)$ is defined, not empty, and each $f \in g(fmap)(c)$ matches $c$.
\item For each $(l,c_1,c_2) \mapsto d \in G(links)$ there exists 
elements $e_1 \in g(fmap)(c_1)$ and $e_1 \in g(fmap)(c_1)$ such that
the $l(d,e_1,e_2)$ holds. Moreover, for each $e_1 \in g(fmap)(c_1)$ there must be an $e_2 \in g(fmap)(c_2)$, such that
$l(d,e_1,e_2)$ (and dually the other way around).
\end{itemize} 

Now, to lift a tactic we need to: \emph{unlift} goal $g$ to project the underlying proof state; apply the tactic; and lift the resulting proof states to goals of a type in $\{\beta_1,\ldots,\beta_n\}$ (which becomes instantiated to specific goal types when matching the
RHS in the first step). Then, for a list $L$ of proof states, let $lp(\beta_1,\ldots,\beta_n; L)$ be the set of all partitions of $L$ 
lifted into $n$ lists of goals $\{ (\textit{map lift }L_1)$, $\ldots,$ $(\textit{map lift } L_n) \}$, such that all of the goals in 
the $i$-th list have goal type $\beta_i$. Then, we define lifting as:
$$
  \textit{lift}(tac) = \lambda g.~
  \left\{
   \begin{array}{lll}
    lp(\beta_1,\ldots,\beta_n; \textit{tac}(\emph{unlift}(g))) & \textrm{ if }g\textrm{ is of type }\alpha \\
    \emptyset & \textrm{ otherwise}
  \end{array}\right.
$$ 
We are then left to define \emph{unlifting} and \emph{lifting} for a single goal node and a single goal type. Firstly, a naive \emph{unlifting} of a goal simply projects the goal state. 
More elaborate unliftings are tactic dependent, and may e.g. add all facts from a particular fact class as active assumptions beforehand. 

\emph{Lifting} is a partial function, and an element of \emph{lp} is only defined if lifting of all elements succeeds. There are several
(type-safe) ways to implement \emph{lifting}. Here, we show a 
procedure which assumes that all relevant information is passed 
down the graph from the original goal node. Any fact ``added" to
a goal node is thus a fact generated by the tactic. However, one
may ``activate" existing facts explicitly in the tactic which will then be used by lifting. A new goal $g'$ is then lifted as 
follows, using the (new) proof state $ps'$, previous goal $g$,
and goal type $G$ as follows:
\begin{enumerate}
\item Set fields $g'(parent)$ to $g$, and $g'(ps)$ to $ps'$, fail if
  the conclusion does not match $G(concl)$.
\item For each $c \in G(facts)$, set $g'(facts)(c)$ to be all
facts in the range of $g(facts)$ and newly generated facts which matches $c$. If for any $c \in G(facts)$, $g'(facts)(c)$ is empty (or undefined) then fail. 
\item Check all link features. For each $c \in G(facts)$ which is used by a link feature, filter out any element $e \in g'(facts)(c)$
not ``captured" by a link related link match. Fail if there does not
exist an element in the related classes which holds for any of the links or any $g'(facts)(c)$ (for $c \in G(facts)$) is empty after this filtering step. 
\end{enumerate}
\vspace*{-0.2cm}

\section{Generalising Strategies}\label{sec:generalise}

A proof is generalised into a strategy by first lifting the proof tree into a proof strategy graph, and then apply graph transformation techniques which utilises the goal type lattice 
to generalise goal types. Simple generalisation of tactics are also used. One important property when performing such generalisations, is that \emph{any valid proofs on a strategy should also be valid after}, which we will provide informal justification for below. However, note that we do not deal with termination.

\begin{figure}[t]
  \begin{center}
    \includegraphics[width=0.85\textwidth]{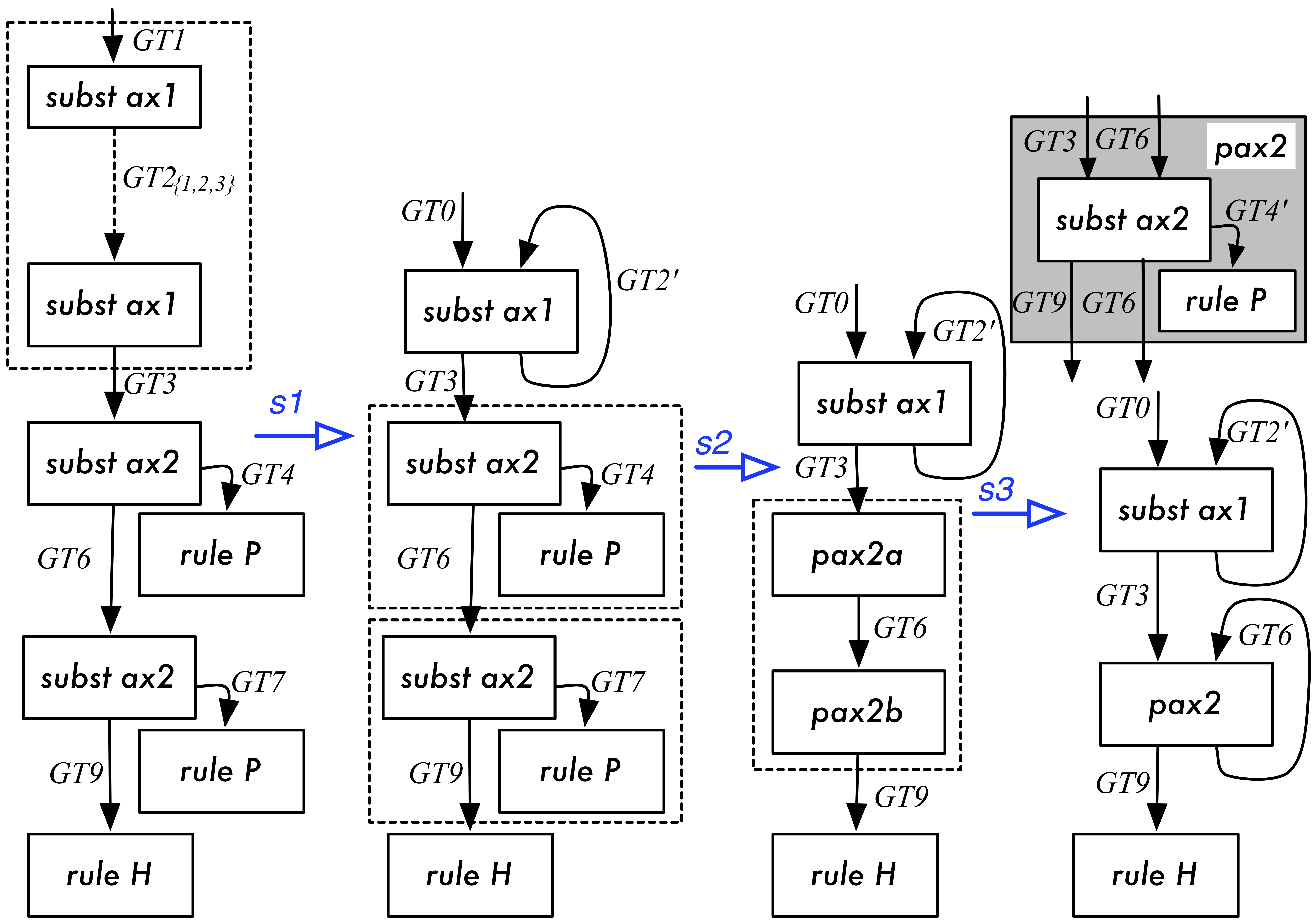}
  \end{center}
  \vspace{-15pt}
   \caption{Steps made to generalise proof of (\ref{eq:g1})}
   \label{fig:genseq}
\end{figure}

\subsection{Deriving Goal Types from Proof States}

In this section we will discuss how to generalise the proof 
shown in Figure \ref{fig:isaproof} into the mutation strategy shown in Figure \ref{fig:mutation}, utilising the lattice structure of goal types. 

However, first we need to turn the proof tree of Figure \ref{fig:isaproof} into a low-level proof strategy graph of the same shape. We utilise techniques described in \cite{Velykis12} to get the initial proof tree. Now, since the shape is the same this reduces to (1) generalising proof states into goal types and (2) generalising the tactics. 

(1) To generalise the proof state into goal type we have taken an approach which can be seen as a ``locally maximum" derivation of goal type, where each assumption becomes a separate class, and make each class as specific as possible. Any link features that holds are also included. Consequently, the goal type will be as far down the lattice as possible whilst still being able to lift the goal state it is derived from. To illustrate, we will show how the proof state (\ref{eq:g1}) is lifted to goal type $GT1$. Let 
$$
\begin{array}{rlcrl}
H & = \{has\_symbol \mapsto [[*,\wedge]]\},\{top\_symbol \mapsto [[*]]\} \\
P & = \{has\_symbol \mapsto [[pure]],top\_symbol \mapsto [[pure]]\} \\
G & = \{has\_symbol \mapsto [[*,\wedge]],top\_symbol \mapsto [[*]]\} \\
L & = \{(symb\_at\_pos,G,H) \mapsto [[\bot]],
(symb\_at\_pos,G,P) \mapsto [[\bot]],
(symb\_at\_pos,H,P) \mapsto [[\bot]]\}.
\end{array}
$$
Then $GT1$ becomes $\{link : L, facts : \{H,P\},concl : G \}$. Note that the last two link features are useless, and are therefore ignored henceforth. However, this shows that in the presence of larger goal states and/or more properties heuristics will be required to reduce the size of the goal types, and filter out such ``useless information". This is future work.

(2) Tactics are kept with the difference that if a local assumption is used (e.g. $h$ or $p$) their respective class is used instead.

The resulting tree is shown left-most of Figure \ref{fig:genseq}. For
space reasons we have not included the goal types, but provided
a name when referred to in the text. This is slightly more general than the original proof as it allows a very slight variation of the goals. However, it still e.g. relies on the exact number of application of each tactic. 

\subsection{Generalising Tactics}

Next, we need to generalise tactics,  A simple example of this is when sets of rules are used as arguments
for the \textsf{subst} and \textsf{rule} tactics.
Here, $\textsf{subst}~R_1$ and $\textsf{subst}~R_2$ can be generalised into $\textsf{subst}~(R_1 \cup R_2)$. Another example turns a tactic into a graph tactic which nest both these tactics (and can be unfolded to either). A proviso for both is that their input and output goal types can be generalised. Both these generalisations only increases the search space and are thus proper generalisations. 

Graph tactics can also be generalised by generalising the graph they nest into one. We return to this with an example below. We will use the notation $\emph{gen}(t_1,t_2)$ for the generalisation of the two given tactics.

\subsection{Generalising Goal Types}

In the context of goal types: {\em generalisation} 
refers to computing the most general goal type for two existing goal
types; while {\em weakening} applies to only one goal type and
makes the description of it more general.
Crucial to both generalisation and weakening is that multiple
possible generalised and weakened goal types exist. 

We use the notion of a {\em least upper bound} for a goal type lattice, 
described in \S \ref{sec:goaltype} using the join operator $\vee$, to define
generalisation for goal types. 
For a class $C$, we write:
\begin{definition}
$C$ is a generalisation of $C_1$ and $C_2$, also written
$C=gen(C_1,C_2)$, if $\forall f \in \mathcal{N}. \;C(f) = C_1(f) \vee C_2(f)$.
\end{definition}
\noindent As an example, consider the two classes shown in (\ref{eq:c2ex})
and (\ref{eq:c3ex}). We can compute $G = gen(C_1,C_2)$ by appealing to the set theoretic
semantics and tranferring back to the class representation. For $f_1 =
top\_symbol$ and $f_2 = has\_symbol$ we compute
$$
\begin{array}{rll}
C(f_1)^s = & (\sem{\wedge} \cup \sem{*}) & \leadsto C(f_1) = [[\wedge],[*]] \\
C(f_2)^s = & 
\multicolumn{2}{l}{((\sem{*} \cap \sem{\wedge}) \cup (\sem{\vee} \cap
\sem{*})) \cup (\sem{*} \cap \sem{\wedge} \cap \sem{\vee})} \\
= & ((\sem{*} \cap \sem{\wedge}) \cup (\sem{\vee} \cap \sem{*}))
\qquad\qquad\qquad & \leadsto C(f_2) = [[*,\wedge],[\vee,*]]
\end{array}
$$
\noindent producing a generalised
class:
\begin{eqnarray*}
C:&\{(\textit{top\_symbol} \mapsto [[\vee],[*]]),(\textit{has\_symbol} \mapsto
[[*,\wedge],[\wedge,\vee]])\}&
\end{eqnarray*}
The definition of generalisation for links  extends similarly from
its associated lattice theory described in
\S\ref{sec:linkclasses}. Recall that we assume orthogonality of fact
classes. We define a function {\em gen\_map} over two sets of (fact) classes, which generalises pairwise each fact class. Here, for any two fact classes $H_1$ and $H_2$ in the generalised set of fact classes, where $(H_1 ~\subtype ~H_2) \;\wedge\; \neg(H_1 ~\bot~
H_2)$ we only retain  $H_2$, thus
ensuring orthogonality. We can then define a function $gen$ on goal types to be
{\it \vspace{-5pt}
\begin{tabbing}
$\qquad$ gen($G_1$, $G_2$) := \{ \= \textit{concl} \= = gen($G_1$(\textit{concl}),$G_2$(\textit{concl})), \\
\> \textit{facts} \> = gen\_map($G_1$(\textit{facts}),$G_2$(\textit{facts})),\\
\> \textit{link} \> =  gen($G_1$(\textit{link}),$G_2$(\textit{link}))
\end{tabbing}}
\vspace{-10pt}

\begin{figure}[t]
\vspace{-10pt}
\includegraphics[width=\textwidth]{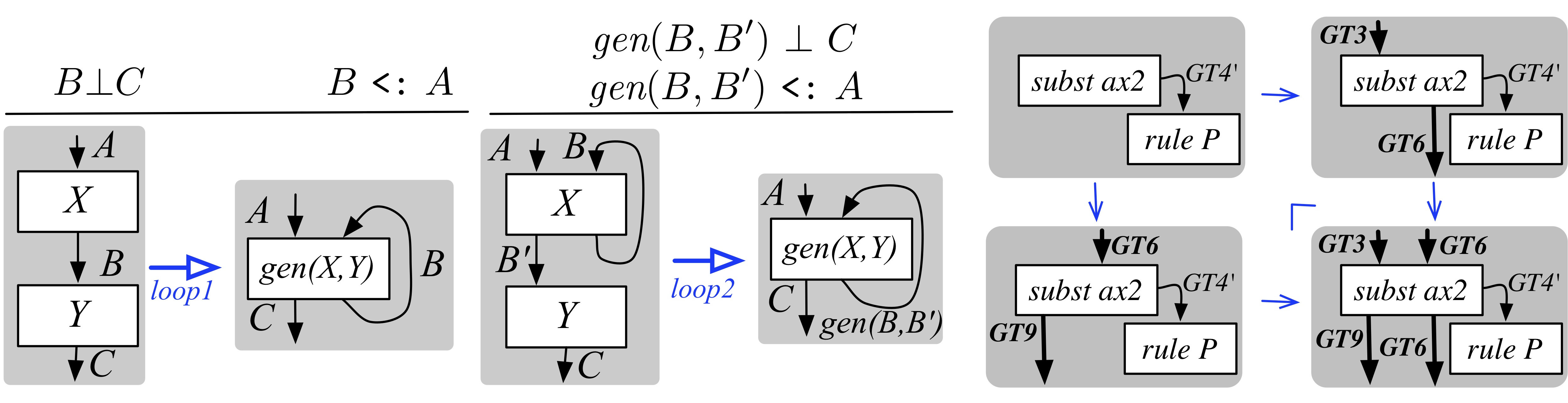}
\vspace{-20pt}
\caption{Generalisations: left/middle: loops rules; right: push-out over common subgraph}
\label{fig:genrules}
\end{figure}

\subsection{(Re-)Discovering the Mutation Strategy} \label{sec:stor}

Armoured with the techniques for generalising the edges and nodes of a proof strategy, we now develop two techniques which allows us to generate our proof into the required mutation strategy.

Firstly, we need to abstract over the number of repeated sequential applications of the same tactic -- i.e. we need to discover loops.  When working in a standard LCF tactic
language \cite{Gordon79}, the problem is
to know: (a) on which goals (in the case of side conditions) the tactic should be repeated,
and (b) when to stop. This was highlighted in  \cite{phd:Duncan:02}, where a regular expression language, closely aligned with common LCF tacticals, was used to learn proof tactics, and hand-crafted heuristics were defined to state when to stop a loop (which by the way would fail for our example).

The advantage of our approach, is that we can utilise the goal types to identify termination 
conditions -- reducing termination and goal focus to the same case, thus also handling the more general proof-by-cases paradigm.  
We illustrate our approach with what can be seen as an inductive representation of tactic looping, as shown by rules \emph{loop1} and \emph{loop2} of Figure \ref{fig:genrules}.
For \emph{loop1}, we can see that it is correct
since $B \bot C$ ensures that a goal will exit the loop when it matches
$C$. Moreover, the $B~\subtype~A$ pre-condition ensures that the tactic
can handle the input type. For \emph{loop2}, similar arguments holds for the generalised $\textit{gen}(B,B')$ edge.

Consider the left most graph of Figure \ref{fig:genseq}, which is the proof tree lifted to a graph. 
Here, the stippled box highlights the sub-graph which matches with the rules shown above. \emph{loop1} is applied first, followed by two applications
of \emph{loop2}. The classes are identical so we only discuss
link classes, which have the following values:
\begin{eqnarray*}
GT1(link) = GT2_n(link) =&\{(symb\_at\_pos,G,H) \mapsto&[[\bot]]\}\\
GT3(link) =&\{(symb\_at\_pos,G,H) \mapsto&[[1]]\}
\end{eqnarray*}
where $GT2_n$ denote the goal types in the intermediate stages of the repeated
application of tactic {\em subst ax1}. Now, for the sequence to be detected as a loop, we must
first discover
$$
GT2' = gen(gen(GT2_1,GT2_2),gen(GT2_3) = GT2_1
$$
\noindent and show $GT2'~\subtype~GT0$ and $GT2' \bot GT3$. These are
both true since $GT2'$ and $GT1$ are equal, and $GT2'$ and $GT3$
are orthogonal due to the existence of $\bot$ in the data argument
denoting an empty feature.

The next step (s2) of Figure \ref{fig:genseq} \emph{layers} the highlighted sub-graphs into the graph tactics \textit{pax2a} and \textit{pax2b}. Such layering can be done for a (connected) sub-graph if the inputs and outputs of the sub-graphs are respectively orthogonal. 

Next, we again apply rule \emph{loop1} to the \textit{pax2a} and \textit{pax2b} sequence. However, this requires us to generalise these two graph tactics, i.e. combining the two graphs they contain into one. Now, as shown in \cite{paper:Dixon:10}, in the category of string graphs, two graphs are composed by a \emph{push-out} over a common boundary. We can combine two graph tactics in the same way by a push-out over the \emph{largest common sub-graph}. This is shown on the right-most diagram of Figure \ref{fig:genrules}, which becomes the last step (s3) of
Figure \ref{fig:genseq}.

This graph is in fact the mutation strategy of Figure \ref{fig:mutation}, with the addition that we have given semantics to the edges. Now, the first feedback loop is
identified by $\{(symb\_at\_pos,concl,H) \mapsto [[\bot]]\}$,
while the second feedback loop is identified by $\{(is\_match,concl,H) \mapsto [[false]]\}$.



%

\section{Related Work}\label{sec:related}

We extend \cite{Grov13}, which introduces the underlying strategy
language, by developing a theory for goal types which we show form
a lattice, and using this property to develop techniques for generalising strategies. 

Our goal types can be seen as a lightweight implementation of pre/post-condition used in \emph{proof planning}~\cite{Bundy88} -- with the  additional property that the language  captures the flow of goals. It can be seen as further extending the marriage of \emph{procedural}
and \emph{declarative} approaches to proof strategies \cite{Autexier10,Harrison96,Giero:07}, and addressing issues related
to goal flow and goal focus highlighted in \cite{Asperti09} -- for a more
detailed comparison we refer to \cite{Grov13}.

The lattice based techniques developed for goal type generalisation
is similar to \emph{antiunification} \cite{Plotkin69} which generalises
two terms into one (with substitutions back to the original terms). Whilst
each feature is primitive, the goal type has several dimensions. More
expressive class/link features, which is future work, may require
higher-order anti-unification \cite{Krumnack07} -- and such ideas may
also be applicable to graph generalisations. Other work that may become
relevant for our techniques are graph abstractions/transformations
used in algorithmic heap-based program verification techniques, 
such as \cite{Boneva07}, and for parallelisation of functional 
programs \cite{Grov10}.

As already discussed, the problem when ignoring goal information, 
is that one cannot describe e.g. where to send a goal or
when to terminate a loop, in a way sufficiently abstract to capture
a large class of proofs. Instead, often crude, heuristics have to be used
in the underlying tactic language. This is the case for 
\cite{phd:Duncan:02}, which uses a regular expression language (close to LCF tactics), originally developed in \cite{misc:Jamnik:01} to learn
proof plans. \cite{misc:Jamnik:01} further claims that \emph{explanation based generalisation} (EBG)\cite{Mitchell97} is applied to derive pre/post-conditions, but no details of this are provided. An EBG approach is also applied to generalise Isabelle proof terms
into more generic theorems in \cite{Lueth04}. This could provide an alternative
starting point for us, however, one may argue that much of the user intent will be lost by working in the low-level proof term representation. Further, note that our work focuses on proof of conjectures which requires
\emph{structure}, meaning machine learning techniques -- such as 
\cite{TsivtsivadzeUGH11}, which learns heuristics to select relevant axioms/rules for automated provers -- are not sufficient. However, in
 \cite{Grov12}, an approach to combine essentially our techniques, with  more probabilistic techniques to cluster interactive proofs  \cite{Komendantskaya13}, was outlined.

We would also like to utilise work on 
proof and proof script \emph{refactoring} \cite{paper:Whiteside:11}. This could be achieved either 
as a pre-processing step, or by porting these techniques
to our graph based language. Finally, albeit for source code, \cite{Mens05} argues for the use of graphs to perform refactorings, which further justifies our graph based representation of proof strategies for the work presented here.

\section{Conclusion and Future Work}\label{sec:concl}

In this paper we have reported on our initial results in creating a technique to generalise proof into high-level proof strategies which can be used to automatically discharged similar conjectures. This paper has two contributions: (1) the introduction of goal type to describe properties of goals using a lattice structure to enable generalisations; (2) two generic techniques, based upon loop discovery to generalise a proof strategy. The techniques was motivated and illustrated by an example from separation logic. We are in the process of implementation in Isabelle combined with the Quantomatic graph rewriting engine \cite{Quantomatic}. Next plan to implement these methods in order to test them on more examples, using a larger set of properties to represent the goal types. In particular, we are interested in less syntactic properties, such as the origin of a goal,
or if it is in a decidable sub-logic.

We also showed how the lattice structure corresponds to sub-typing, and we plan to incorporate sub-typing in the underlying theory of the language in order to utilise it when composing graphs. Further, we plan to develop more techniques for generalising graphs, which may include develop an underlying theory of \emph{graph
generalisation}, which will be less restrictive than rewriting.

Finally, we have already touched upon the need for heuristic guidance in this work, as there will be many ways of generalising. We are also planning to apply the techniques to extract strategies from a corpus of proofs. Here we believe we have a much better chance of finding and
generalising common \emph{sub-strategies}, and may also incorporate
probabilistic techniques as a pre-filter \cite{Grov12}. Such work may
help to indicate which class/link features are more common, and can be used to improve the generalisation heuristics discussed above. 
Further, we would like to remove the restriction that facts have to be
orthogonal, and improve the sub-typing to handle this case.

We only briefly discussed the process of turning proofs into initial low-level proof strategy graphs. With partners on the AI4FM project (\url{www.ai4fm.org}) we are working on utilising their work
on  capturing the full \emph{proof process}, where the user
may (interactively) highlight the key features of a proof (step)
\cite{Velykis12}. This can further help the generalisation heuristics.


\subsection*{Acknowledgements}
This work has been supported by EPSRC grants: EP/H023852, EP/H024204 and EP/J001058. We would like to
thank Alan Bundy, Aleks Kissinger, Lucas Dixon, members of the
AI4FM project, Katya Komendantskaya, Jonathan Heras and Colin Farquhar
for feedback and discussions.

\bibliographystyle{plain}
\bibliography{stratlang}

\begin{thebibliography}{10}

\bibitem{Asperti09}
A.~Asperti, W.~Ricciotti, C.~Sacerdoti, and C.~Tassi.
\newblock A new type for tactics.
\newblock In {\em PLMMS'09}, pages 229--232, 2009.

\bibitem{Autexier10}
Serge Autexier and Dominik Dietrich.
\newblock A tactic language for declarative proofs.
\newblock In {\em ITP'10}, volume 6172 of {\em LNCS}, pages 99--114. Springer,
  7 2010.

\bibitem{Boneva07}
I.B. {Boneva}, A.~{Rensink}, M.E. {Kurban}, and J.~{Bauer}.
\newblock Graph abstraction and abstract graph transformation.
\newblock Technical Report TR-CTI, University of Twente, July 2007.

\bibitem{Bundy88}
A.~Bundy.
\newblock The use of explicit plans to guide inductive proofs.
\newblock In R.~Lusk and R.~Overbeek, editors, {\em CADE9}, pages 111--120.
  Springer-Verlag, 1988.

\bibitem{AI4FM:avocs09}
A.~Bundy, G.~Grov, and C.B. Jones.
\newblock Learning from experts to aid the automation of proof search.
\newblock In {\em PreProc of AVoCS'09}, pages 229--232, 2009.

\bibitem{paper:Dixon:10}
Lucas Dixon and Aleks Kissinger.
\newblock Open graphs and monoidal theories.
\newblock {\em CoRR}, abs/1011.4114, 2010.

\bibitem{phd:Duncan:02}
Hazel Duncan.
\newblock {\em The use of Data-Mining for the Automatic Formation of Tactics}.
\newblock PhD thesis, University of Edinburgh, 2002.

\bibitem{Giero:07}
M.~Giero, F.~Wiedijk, and Mariusz Giero.
\newblock {MM}ode, a mizar mode for the proof assistant coq.
\newblock Technical report, January~07 2004.

\bibitem{Gordon79}
Michael J.~C. Gordon, Robin Milner, and Christopher~P. Wadsworth.
\newblock {\em Edinburgh LCF}, volume~78 of {\em Lecture Notes in Computer
  Science}.
\newblock Springer, 1979.

\bibitem{Grov13}
G.~Grov and A.~Kissinger.
\newblock A graphical language for proof strategies.
\newblock Submitted to ITP'13. Also available at arXiv:1302.6890.

\bibitem{Grov12}
G.~Grov, E.~Komendantskaya, and A.~Bundy.
\newblock A statistical relational learning challenge - extracting proof
  strategies from exemplar proofs.

\bibitem{Grov10}
G.~Grov and G.~Michaelson.
\newblock Hume box calculus: robust system development through software
  transformation.
\newblock {\em HOSC}, 23:191--226, 2010.

\bibitem{Harrison96}
John Harrison.
\newblock A mizar mode for {HOL}.
\newblock In {\em TPHOLs}, volume 1125 of {\em LNCS}, pages 203--220. Springer,
  1996.

\bibitem{misc:Jamnik:01}
M.~Jamnik, M.~Kerber, and C.~E Benzmuller.
\newblock {L}earning {M}ethod {O}utlines in {P}roof {P}lanning.
\newblock Technical Report CSRP-01-8, University of Birmingham (CS), 2001.

\bibitem{Lueth04}
E.B. Johnsen and C.~L\"uth.
\newblock Theorem reuse by proof term transformation.
\newblock In {\em {TPHOLs 2004}}, volume 3223 of {\em LNCS}, pages 152--167.
  Springer, 2004.

\bibitem{Quantomatic}
A.~Kissinger, A.~Merry, L.~Dixon, R.~Duncan, M.~Soloviev, and B.~Frot.
\newblock Quantomatic, 2011.

\bibitem{Komendantskaya13}
E.~Komendantskaya, J.~Heras, and G.~Grov.
\newblock Machine learning in proof general: Interfacing interfaces.
\newblock {\em CoRR}, abs/1212.3618, 2012.

\bibitem{Krumnack07}
U.~Krumnack, A.~Schwering, H.~Gust, and K-U K{\"u}hnberger.
\newblock Restricted higher-order anti-unification for analogy making.
\newblock In {\em AJAI 2007}, volume 4830 of {\em LNAI}, pages 273--282.
  Springer, 2007.

\bibitem{Maclean11}
Ewen Maclean and Andrew Ireland.
\newblock Mutation in linked data structures.
\newblock In {\em ICFEM}, volume 6991 of {\em LNCS}, pages 275--290. Springer,
  2011.

\bibitem{Mens05}
T.~Mens, N.~Van Eetvelde, S.~Demeyer, and D.~Janssens.
\newblock Formalizing refactorings with graph transformations.
\newblock {\em Journal of Software Maintenance}, 17(4):247--276, 2005.

\bibitem{Mitchell97}
Tom Mitchell.
\newblock {\em Machine Learning}.
\newblock McGraw-Hill, 1997.

\bibitem{Plotkin69}
G.~D. Plotkin.
\newblock A note on inductive generalization.
\newblock In {\em Machine Intelligence 5}, pages 153--163, Edinburgh, 1969.
  Edinburgh University Press.

\bibitem{SepLogic02}
J.~C. Reynolds.
\newblock Separation logic: A logic for shared mutable data structures.
\newblock In {\em Logic in Computer Science}, pages 55--74. IEEE Computer
  Society, 2002.

\bibitem{TsivtsivadzeUGH11}
E.~Tsivtsivadze, J.~Urban, H.~Geuvers, and T.~Heskes.
\newblock Semantic graph kernels for automated reasoning.
\newblock In {\em Proc. 11th SIAM Int. Conf. on Data Mining}, pages 795--803.
  SIAM / Omnipress, 2011.

\bibitem{Velykis12}
Andrius Velykis.
\newblock Inferring the proof process.
\newblock In Christine Choppy, David Delayahe, and Ka\"{i}s Kla\"{i}, editors,
  {\em FM2012 Doctoral Symposium}, Paris, France, August 2012.

\bibitem{paper:Whiteside:11}
I.~Whiteside, D.~Aspinall, L.~Dixon, and G.~Grov.
\newblock Towards formal proof script refactoring.
\newblock In {\em CICM'11}, volume 6824 of {\em LNCS}, pages 260--275.
  Springer, 2011.

\end{thebibliography}



\end{document}